\theoremstyle{definition}
\numberwithin{equation}{section}
\newtheorem{proposition}[equation]{Proposition}
\newtheorem{claim}[equation]{Claim}
\newtheorem{lemma}[equation]{Lemma}
\newtheorem{theorem}[equation]{Theorem}
\newtheorem{corollary}[equation]{Corollary}
\newtheorem{definition}[equation]{Definition}
\newtheorem{remark}[equation]{Remark}
\newtheorem{example}[equation]{Example}
\newtheorem{observation}[equation]{Observation}
\newcommand{\bdc}{\operatorname{bdc}}
\newcommand{\dc}{\operatorname{dc}}
\newcommand{\aS}{\mathfrak{S}}
\newcommand{\per}{\textup{per}}
\newcommand{\weight}{\textup{weight}}
\renewcommand{\det}{\textup{det}}
\newcommand{\gO}{\ensuremath{\mathcal{O}}}
\newcommand{\ComCla}[1]{\ensuremath{\textup{\textbf{{#1}}}}}
\newcommand{\VP}{\ComCla{VP}}
\newcommand{\VNP}{\ComCla{VNP}}
\newcommand{\VPws}{\ComCla{VP}\ensuremath{_{\ComCla{ws}}}}
\newcommand{\VPs}{\ComCla{VP}\ensuremath{_{\ComCla{s}}}}
\newcommand{\deto}{\ensuremath{\ComCla{DETP}^0}}
\title[Binary Determinantal Complexity]{Binary Determinantal Complexity}
\author{Jesko H\"uttenhain$^\ast$}
\thanks{$^\ast$Technische Universit\"at Berlin, jesko$@$math.tu-berlin.de}
\author{Christian Ikenmeyer$^\dagger$}
\thanks{$^\dagger$Texas A\&M University, ciken$@$math.tamu.edu}
\newcommand{\HB}{\mathbb{B}}
\newcommand{\HZ}{\mathbb{Z}}
\newcommand{\HN}{\mathbb{N}}
\newcommand{\kS}{\mathfrak{S}}
\newcommand{\df}{:=}
\let\dd\emph
\def\ifempty#1{\def\@temp{#1}\ifx\@temp\@empty}
\newcommand{\ifnonempty}[2]{\ifempty{#1}\else#2\fi}
\newcommand{\enclspacing}{}
\newcommand{\cenclose}[7][auto]{%
\ifempty{#1} %
\ifnonempty{#2}{#2\enclspacing}#3 %
\ifnonempty{#4}{#7#4#7}#5%
\ifnonempty{#6}{\enclspacing#6}%
\else\ifthenelse{\equal{#1}{auto}}%
{
\ifthenelse{\equal{#2}{}}{\left.}{\left#2}\enclspacing#3%
\ifthenelse{\equal{#4}{}}{}{#7\middle #4#7}#5%
\enclspacing\ifthenelse{\equal{#6}{}}{\right.}{\right#6}
}{
\ifthenelse{\equal{#2}{}}{}{\csname#1l\endcsname#2}\enclspacing#3%
\ifthenelse{\equal{#4}{}}{}{#7\csname#1\endcsname#4#7}#5%
\enclspacing\ifthenelse{\equal{#6}{}}{}{\csname#1r\endcsname#6}
}
\fi}
\newcommand{\enclose}[4][auto]{\cenclose[#1]{#2}{#3}{}{}{#4}{}}
\newcommand{\set}[2][auto]{\enclose[#1]\{{#2}\}}
\newcommand{\abs}[2][auto]{\enclose[#1]|{#2}|}
\newcommand{\cset}[3][auto]{\cenclose[#1]\{{#2}\lvert{#3}\}\:}
\else \newcommand{\divides}{\mathbin\vert} \fi
\newcommand{\dotter}[3][]{#2#3\ifthenelse{\equal{#1}{}}{}{\widehat{#1}#3}#2}
\newcommand{\pts}[2][]{\dotter[#1]{#2}{\cdots}}
\newcommand{\dts}[2][]{\dotter[#1]{#2}{\ldots}}
\begin{document}
\raggedbottom
\begin{abstract}
We prove that for writing the 3 by 3 permanent polynomial as a determinant of a matrix consisting only of zeros, ones, and variables as entries,
a 7 by 7 matrix is required.
Our proof is computer based and uses the enumeration of bipartite graphs.

Furthermore, we analyze sequences of polynomials that are determinants of polynomially sized matrices consisting only of zeros, ones, and variables.
We show that these are exactly the sequences in the complexity class of constant free polynomially sized (weakly) skew circuits.

\end{abstract}

\maketitle

{\footnotesize
\noindent\textbf{Keywords:} determinant, permanent, computational complexity, arithmetic circuit, constant-free%
\\[1ex]
\noindent\textbf{2010 Mathematics Subject Classification:} 68Q05; 68-04.
\\[1ex]
\noindent\textbf{2012 ACM Computing Classification System:} Theory of computation -- Models of computation; Theory of computation -- Computational complexity and cryptography -- Circuit complexity
}

\section{Introduction}
Let $\aS_m$ denote the symmetric group on $m$ letters and let
$
\per_m := \sum_{\pi \in \aS_m} \prod_{i=1}^m x_{i,\pi(i)}
$
denote the $m \times m$ permanent polynomial in $m^2$ variables.
The flagship problem in algebraic complexity theory is finding superpolynomial lower bounds for
the determinantal complexity of the permanent polynomial,
a question whose roots date back to Valiant's seminal paper \cite{Val:79b}, with an additional emphasis on the special role of the permanent in \cite{val:79}.

We call a matrix whose entries are only variables or integers an \dd{integer variable matrix}.
One main implication of \cite{Val:79b} is the following theorem.
\begin{theorem}\label{thm:universalityofdet}
For every polynomial $f$ with rational coefficients one can always find a square matrix~$A$ whose entries are variables or rational numbers such that $\det(A)=f$.
Moreover, if $f$ has only integer coefficients, then $A$ can be chosen as an integer variable matrix.\qed
\end{theorem}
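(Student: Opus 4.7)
The plan is to build the matrix $A$ inductively on the structure of an arithmetic formula computing $f$. Since every polynomial $f \in \mathbb{Q}[x_1,\dots,x_n]$ (respectively $f \in \mathbb{Z}[x_1,\dots,x_n]$) can be written trivially as a sum of its monomials, it admits a formula using only $+$ and $\times$ whose leaves are variables $x_i$ or constants from $\mathbb{Q}$ (respectively $\mathbb{Z}$). Both statements of the theorem thus reduce to a single gadget construction that introduces no fresh constants beyond $0$ and $\pm 1$.

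Concretely, I would exhibit three primitives. A leaf $x_i$ or $c$ is realized by the $1 \times 1$ matrix $(x_i)$ or $(c)$. A product node $f_1 \cdot f_2$ is realized by the block-diagonal matrix $A_1 \oplus A_2$, whose determinant is $\det(A_1)\det(A_2) = f_1 f_2$. A sum node $f_1 + f_2$ is realized by a block matrix of the form
\[
\begin{pmatrix} A_1 & \ast & 0 \\ \ast & \ast & \ast \\ 0 & \ast & A_2 \end{pmatrix}
\]
whose starred blocks are filled with $0$s and $\pm 1$s in such a way that Laplace expansion along the middle row/column strip returns $\det(A_1) + \det(A_2)$. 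A more uniform route is to compile the whole formula into an algebraic branching program, i.e.\ a weighted acyclic digraph with a source $s$ and sink $t$ whose $s$-$t$ path-weight generating function equals $f$, and then invoke the classical identity that for such a DAG on $N$ vertices there is an $(N-1) \times (N-1)$ matrix with entries in $\{0, \pm 1\} \cup \{\text{edge labels}\}$ whose determinant equals $\pm f$; a single sign flip on one entry corrects the global sign.

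The main obstacle is the design of the sum gadget, or equivalently the ABP-to-matrix conversion, in such a way that no constants outside $\{0, \pm 1\}$ are introduced. Once this is achieved, the integer case of the theorem is automatic: the induction never produces rationals beyond those already occurring as coefficients of the input polynomial, and verifying the determinant identity for the gadget reduces to a direct Laplace-expansion computation.
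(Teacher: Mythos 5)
The paper does not prove this theorem at all: it is stated with a \qed{} and attributed to Valiant \cite{Val:79b}, so there is no in-paper argument to match. Your route --- formula $\to$ algebraic branching program $\to$ determinant via cycle covers --- is exactly Valiant's classical argument, and it is also the machinery this paper builds for its \emph{other} results (the cycle-cover observation and Lemma~\ref{BranchingProgramToBDC}, which is precisely the ``classical identity'' you invoke: identify $s$ with $t$, put unit self-loops on all other vertices, and take the directed adjacency matrix, which introduces no constants beyond $0$ and $1$). So the strategy is the right one.

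Two concrete points keep this from being a complete proof. First, you explicitly leave the sum gadget / ABP compilation as ``the main obstacle'' rather than resolving it. The resolution is standard but not free: sums are parallel composition of ABPs and products are series composition, and because the determinant weights an $s$-$t$-path of length $\ell$ by $(-1)^{\ell-1}$ (this is why the paper builds the sign into its definition of path weight), paths of different length parity contribute with opposite signs. You must therefore pad the sub-programs so that all $s$-$t$-paths have the same length parity (and subdivide any parallel edges created by the parallel composition); without saying this, the claim that Laplace expansion ``returns $\det(A_1)+\det(A_2)$'' is unverified and, for naive gadgets, false. Second, ``a single sign flip on one entry corrects the global sign'' is wrong in general: replacing $a_{ij}$ by $-a_{ij}$ changes $\det(A)$ by $-2a_{ij}C_{ij}$, not by an overall sign. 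You need to negate an entire row or column, or swap two rows as the paper does in the proof of Lemma~\ref{BranchingProgramToBDC} (this is harmless for the integer-coefficient case since $-1$ is an allowed entry, but it is a different operation from the one you describe). With those two repairs the argument goes through, and the integer case is indeed automatic since the construction introduces only $0$, $\pm1$, variables, and the original coefficients.
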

For example,
\begin{equation}\label{eqn:firstexample}
\det\begin{pmatrix}
0      &x_{11}&x_{21}\\
x_{12}&   0   &   1   \\
x_{22}&   1   &   0
\end{pmatrix} = x_{11} x_{22} + x_{12} x_{21} = \per_2.
\end{equation}
For an $n \times n$ square matrix we refer to $n$ as its \dd{size}.
What is the minimal size of a matrix whose determinant is $\per_m$ and whose entries are only variables and rational numbers?
For a given $m$ we take $\dc(\per_m)$ to be this minimal size. It is famously conjectured by Valiant that the sequence $m \mapsto \dc(\per_m)$ of natural numbers grows superpolynomially fast.
In modern terms we can concisely phrase this conjecture as $\VPws \neq \VNP$, see for example \cite{mapo:08}.
A graph construction by Grenet \cite{Gre:11}, see Section~\ref{grenetconstruction}, has the following consequence.
\begin{theorem}\label{thm:grenet}
For every natural number $m$ there exists an integer variable matrix~$A$ of size $2^m-1$ such that $\per_m = \det(A)$.
Moreover, $A$ can be chosen such that the entries in $A$ are only variables, zeros, and ones, but no other constants. \qed
\end{theorem}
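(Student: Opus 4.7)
The plan is to realize Grenet's construction as an explicit weighted directed graph $G_m$ and take $A$ to be its weighted adjacency matrix. I would define the vertex set of $G_m$ to be the collection of proper subsets $S \subsetneq [m]$, so that $|V(G_m)| = 2^m - 1$. The edges come in three types: for each $S$ with $|S| \leq m-2$ and each $j \notin S$ a directed edge $S \to S \cup \{j\}$ of weight $x_{|S|+1,\,j}$; for each $S$ with $|S| = m-1$ a directed edge $S \to \emptyset$ of weight $x_{m,\,j}$, where $\{j\} = [m] \setminus S$; and a self-loop of weight $1$ on every non-empty vertex. All remaining entries of the associated $(2^m-1) \times (2^m-1)$ adjacency matrix $A$ are zero, so $A$ has entries only in $\{0, 1\} \cup \{x_{i,j} : 1 \leq i,j \leq m\}$.

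To compute $\det(A)$, I would apply the standard cycle-cover expansion. The key structural fact is that every non-loop edge either strictly increases subset size or is a wrap-around edge from a size-$(m-1)$ vertex to $\emptyset$. Consequently, any cycle of length at least $2$ in a cycle cover must pass through $\emptyset$ and use exactly one wrap-around edge, and hence visits exactly one vertex of each size $0, 1, \ldots, m-1$; every remaining vertex is forced to be covered by its self-loop. These non-trivial cycles are in bijection with permutations $\pi \in \aS_m$ via
\[
\emptyset \to \{\pi(1)\} \to \{\pi(1), \pi(2)\} \to \cdots \to [m] \setminus \{\pi(m)\} \to \emptyset,
\]
and the corresponding product of edge weights equals $\prod_{i=1}^m x_{i,\pi(i)}$.

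Each such cover has $c = 2^m - m$ cycles on $n = 2^m - 1$ vertices, so contributes the uniform sign $(-1)^{n-c} = (-1)^{m-1}$, independent of $\pi$. This yields $\det(A) = (-1)^{m-1}\per_m$. For even $m$, swapping any two rows of $A$ flips the sign while preserving the entry set $\{0, 1\} \cup \{x_{i,j}\}$, producing the integer variable matrix of size $2^m - 1$ required by the theorem.

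The only step I expect to require genuine care is the combinatorial verification that no spurious cycle covers exist. This reduces to the observation that the subgraph of non-loop, non-wrap edges is a DAG (it is the Hasse-like diagram of proper subsets of $[m]$ ordered by inclusion), so every cycle of length at least $2$ must use a wrap edge and therefore must contain $\emptyset$; since $\emptyset$ appears in at most one cycle of the cover, there is exactly one such non-trivial cycle, of length $m$. Once this is settled, the remainder is routine bookkeeping of weights and signs.
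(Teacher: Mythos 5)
Your construction is exactly Grenet's: the paper builds the same subset-lattice branching program on $2^m$ vertices and then invokes a general lemma that identifies source and target and adds unit loops, which produces precisely your $(2^m-1)$-vertex digraph with wrap-around edges; your cycle-cover analysis, sign count $(-1)^{m-1}$, and row-swap fix all match the paper's argument. The proposal is correct and essentially the same proof, merely with the branching-program-to-determinant lemma inlined.
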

Theorem~\ref{thm:grenet} gives rise to the following definition.
We call a matrix whose entries are only zeros, ones, or variables, a \emph{binary variable matrix}.
We will prove in Corollary~\ref{cor:valuniversality} that every polynomial $f$ with integer coefficients can be written as the determinant of a binary variable matrix
and that the size is almost the size of the matrix from Theorem~\ref{thm:universalityofdet}, see Proposition~\ref{pro:integercoeffs} for a precise statement.
We then denote by $\bdc(f)$ the smallest $n$ such that $f$ can be written as a determinant of an $n \times n$ binary variable matrix.
It turns out that the complexity class of sequences $(f_m)$ with polynomially bounded binary determinantal complexity $\bdc(f_m)$ is exactly $\VPws^0$,
the constant free version of $\VPws$, see Section~\ref{sec:vpws} for definitions and proofs.

Theorem~\ref{thm:grenet} shows that $\bdc(\per_m) \leq 2^m -1$. It is easy to see that this upper bound is sharp for $m=1$ and for $m=2$.

The best known general lower bound is $\bdc(\per_m) \geq \frac {m^2} 2$ due to \cite{MR:04} in a stronger model of computation, see also \cite{lamare:10} for the same bound in an even stronger model of computation.
This implies that $\bdc(\per_3)$ is either~$5$,~$6$, or~$7$.

The main result of this paper is the following.
\begin{theorem}
\label{PermanentLowerBound}
$\bdc(\per_3)=7$.
\end{theorem}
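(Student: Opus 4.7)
\emph{Reduction.} The upper bound $\bdc(\per_3)\le 7$ is Theorem~\ref{thm:grenet}, and the cited $\tfrac{m^2}{2}$ lower bound gives $\bdc(\per_3)\ge 5$, so what remains is to rule out sizes $5$ and $6$; that is, the plan is to show that no $5\times 5$ or $6\times 6$ binary variable matrix $A$ satisfies $\det(A)=\per_3$. I will parametrize a hypothetical such $A$ by (i) its support pattern, viewed as a bipartite graph $G_A$ on $[n]\sqcup[n]$ whose edges are the non-zero positions of $A$, carrying a two-coloring that distinguishes the positions holding the constant $1$ from those holding a variable, and (ii) an assignment $\ell$ labeling each variable-edge by one of the nine permanent variables $x_{ij}$ with $i,j\in\{1,2,3\}$. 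Via Leibniz's formula $\det(A)$ is then a signed sum over the perfect matchings of $G_A$, each matching contributing the product of its edge-labels with the sign of the corresponding permutation.

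\emph{Symmetries, pruning, and enumeration.} The plan is to enumerate the edge-colored bipartite graphs $G_A$ up to the natural $\aS_n\times\aS_n$-action on rows and columns, which changes $\det(A)$ only by a sign, using a canonical-form procedure for bipartite graphs (e.g.\ \emph{nauty}-style) to avoid isomorphic duplicates. For each surviving shape one then enumerates the labelings $\ell$ modulo the $\aS_3\times\aS_3$-action on variable indices that fixes $\per_3$, searching for a configuration with $\det(A)=\pm\per_3$. The key to making this tractable is aggressive pruning by necessary conditions on $G_A$ extracted from the target polynomial: no row or column may be all zero; every perfect matching of $G_A$ must use exactly three variable-edges (otherwise $\det(A)$ would contain a monomial of degree $\ne 3$); and the three-element multisets of variables arising from the perfect matchings, weighted by the sign of the corresponding permutation, must reproduce precisely the six monomials of $\per_3$ each with coefficient $+1$. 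Any pattern surviving these filters is then verified by a direct symbolic determinant computation and comparison to $\per_3$.

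\emph{Main obstacle.} The principal difficulty is combinatorial explosion: the raw search space of $n\times n$ binary variable patterns already grows like $11^{n^2}$ (nine variable choices plus $0$ and $1$ per entry), and even after quotienting by the $\aS_n\times\aS_n$ and $\aS_3\times\aS_3$ symmetry groups the enumeration at $n=6$ remains large. The technical heart of the argument will therefore be the interplay between sharp structural filters on $G_A$ and fast bipartite-graph canonicalization, tuned so that the computation finishes within reasonable resources. A secondary but delicate issue is certification: a single missed isomorphism class would invalidate the lower bound, so one would want to double-check the enumeration with an independently written verifier and, as a sanity check, confirm that the same procedure correctly recovers Grenet's construction at $n=7$.
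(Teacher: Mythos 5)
There is a genuine gap, and it sits exactly where your proposal defers the work. Your pruning rule \enquote{every perfect matching of $G_A$ must use exactly three variable-edges, otherwise $\det(A)$ would contain a monomial of degree $\ne 3$} is not a valid necessary condition: distinct perfect matchings can contribute the same monomial with opposite signs, so all degree-$\ne 3$ contributions could cancel even though individual matchings use $2$ or $4$ variable edges. Used as a filter in the enumeration, this could discard a support pattern that in fact admits a labeling with $\det(A)=\per_3$, which would invalidate the lower bound. The same cancellation issue weakens your monomial-multiset filter. The paper sidesteps this entirely by using only \emph{evaluations} of the identity $\det(A)=\per_3$, which are unconditionally necessary: setting all variables to $1$ forces $\det(B(A))=6$ for the support matrix $B(A)$, and setting all variables except one, say $y=x_1$, to $1$ forces $\det(B_I)=2y+4$ where $I$ is the set of positions holding that variable.

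These evaluation conditions are also what make the computation feasible, and your proposal leaves feasibility as an unresolved \enquote{principal difficulty} --- but for a computer-based proof that \emph{is} the proof. The condition $\det(B(A))=6$ kills $n=5$ with no enumeration at all, since the maximal determinant of a $5\times 5$ binary matrix is $5$. For $n=6$ it means you only need to enumerate \emph{uncolored} support bipartite graphs (not edge-colored patterns distinguishing $1$'s from variables, and not labelings by the nine $x_{ij}$), filter to those with determinant $\pm 6$ --- $263$ matrices out of $44\,384$ nauty-generated graphs --- and then for each one run a stepwise reconstruction: recursively choose pairwise disjoint position sets $I_1,\dots,I_k$ with $I_k\in\mathcal S=\cset{I}{\det(B_I)=2y+4}$ and check at each depth that the partial substitution matches $\per_3(x_1,\dots,x_k,1,\dots,1)$. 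All $263$ fail, in seconds. Your edge-colored enumeration over roughly $3^{36}$ patterns times up to $9^{|E|}$ labelings, even after quotienting by $\aS_6\times\aS_6$ and $\aS_3\times\aS_3$, has no demonstrated path to termination, and the one sharp filter you propose to tame it is unsound. You would need to replace it with evaluation-based necessary conditions of the paper's kind (or prove that cancellation cannot occur, which is false in general) before the argument stands.
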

We use a computer aided proof and enumeration of bipartite graphs in our study.
The binary determinantal complexity of $\per_m$ is now known to be exactly $2^m-1$ for $m\in \{1,2,3\}$.
Unfortunately, determining $\bdc(\per_4)$ is currently out of reach with our methods.

\subsection*{Acknowledgments}
We thank Peter B\"urgisser, Gordon Royle, and JM Landsberg very much for interesting and helpful discussions.
We are very grateful to the Simon's Institute for the Theory of Computing in Berkeley for hosting us during this project.

\subsection*{Ancillary files} Since the proof of Theorem \ref{PermanentLowerBound} is computer-based, you can find attached to this note the source code of a C programm called \texttt{ptest}, which performs the algorithm outlined in Section~\ref{stepwise}.

We also include the file \texttt{output-ptest-on-7x7.txt} containing the output of \texttt{ptest} on the \texttt{nauty}-based enumeration of all $7\times 7$ binary variable matrices. This list will be relevant for the observations made in Section \ref{sec:uniqueness}.

\section{Binary Algebraic Branching Programs and the Cost of Computing Integers}

The main purpose of this section is to prove that even though we only allow the constants $0$ and~$1$, \emph{all} polynomials with integer coefficients can be obtained as the determinant of a binary variable matrix, see Corollary~\ref{cor:valuniversality}.
Moreover, the size of the matrices is not much larger than the size of matrices from Theorem~\ref{thm:universalityofdet}, see Proposition~\ref{pro:integercoeffs}.
We use standard techniques from algebraic complexity theory, heavily based on \cite{Val:79b}, but a certain attention to the signs has to be taken.

In what follows, a \dd{digraph} is always a finite directed graph which may possibly have loops, but which has no parallel edges.
We label the edges of a digraph by polynomials. We will almost exclusively be concerned with digraphs whose labels are only variables or the constant~$1$.
Note that we consider only labeled digraphs.

A cycle cover of a digraph $G$ is a set of cycles in $G$ such that each vertex of $G$ is contained in exactly one of these cycles. If a cycle in $G$ has $i$ edges with labels $e_1\dts,e_i$, then its \dd{weight} is defined as $(-1)^{i-1}\cdot e_1\cdots e_i$. The weight of a cycle cover is the product of the weights of its cycles. The \dd{value} of $G$ is the polynomial that arises as the sum over the weights of all cycle covers in~$G$. We then define the \emph{directed adjacency matrix} $A$ of a digraph G as the matrix whose entry $A_{ij}$ is the label of the edge $(i,j)$ or $0$ if that edge does not exist.

In what follows, we will often construct matrices as the directed adjacency matrices of digraphs. The reason is the following well-known observation, see for example \cite{Val:79b}.
\begin{observation}
The value of a digraph $G$ equals the determinant of its directed adjacency matrix.
\end{observation}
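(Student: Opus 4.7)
The plan is to expand the determinant via the Leibniz formula and match terms with cycle covers. Recall that
\[
\det(A) = \sum_{\sigma \in \aS_n} \textup{sgn}(\sigma) \prod_{i=1}^n A_{i,\sigma(i)},
\]
where $n$ is the number of vertices of $G$. The guiding idea is that the set of permutations of $\{1,\dts n\}$ is in natural bijection with the set of cycle covers of $G$ (interpreted as a complete digraph with some edges missing/zero-labelled), and under this bijection the signed monomial contributed by $\sigma$ matches the weight contributed by the corresponding cycle cover.

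First I would fix a permutation $\sigma$ and decompose it uniquely into disjoint cycles of lengths $k_1, \dts, k_r$ summing to $n$. For each such cyclic factor $(i_1\,i_2\cdots i_{k_j})$ the corresponding directed cycle in $G$ uses the edges $(i_1,i_2), (i_2,i_3), \dts, (i_{k_j},i_1)$, and the product $A_{i_1,i_2}\cdots A_{i_{k_j},i_1}$ is either zero (if one of those edges is missing from $G$) or equals the product of the edge labels. So $\prod_i A_{i,\sigma(i)}$ is nonzero exactly when the cycle decomposition of $\sigma$ yields an actual cycle cover of $G$, and in that case it equals the product of the unsigned labels around all cycles.

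Next I would reconcile the signs. A standard fact is that a cycle of length $k$ has sign $(-1)^{k-1}$, so $\textup{sgn}(\sigma) = \prod_{j=1}^r (-1)^{k_j-1}$. This matches exactly the definition of cycle weight in the paper: each cycle of length $k_j$ contributes the factor $(-1)^{k_j-1}\cdot e_1\cdots e_{k_j}$, and by definition the weight of the cycle cover is the product of these. Hence the term indexed by $\sigma$ in the Leibniz expansion equals the weight of the associated cycle cover.

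Finally I would sum over all $\sigma$: permutations whose cycle decomposition does not correspond to a subgraph of $G$ contribute $0$ on both sides, while the remaining permutations are in bijection with cycle covers of $G$ and contribute matching signed weights. Summing yields $\det(A) = \sum_{C\text{ cycle cover of }G} \weight(C)$, which is exactly the value of $G$. There is no genuine obstacle here; the only care needed is the sign bookkeeping, which is handled once one knows that a $k$-cycle has sign $(-1)^{k-1}$.
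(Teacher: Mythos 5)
Your proof is correct and is precisely the standard Leibniz-expansion argument that the paper implicitly relies on (the paper states this observation without proof, citing Valiant): permutations with a nonvanishing product correspond bijectively to cycle covers, and the sign $(-1)^{k-1}$ of a $k$-cycle accounts exactly for the $(-1)^{k-1}$ in the paper's definition of cycle weight, including the case of loops as $1$-cycles. No gaps.
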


As an intermediate step, we will often construct a \emph{binary algebraic branching program}: This is an acyclic digraph $\Gamma=(\Gamma,s,t)$ where every edge is labeled by either $1$ or a variable. The digraph $\Gamma$ has two distinguished vertices, the \emph{source} $s$ and the \emph{target}~$t$, where $s$ has no incoming and $t$ has no outgoing edges. If an $s$-$t$-path in $\Gamma$ has $i$ edges with labels $e_1\dts,e_i$, then its \dd{path weight} is defined as the value $(-1)^{i-1}\cdot e_1\cdots e_i$. The \dd{path value} of $\Gamma$ is the polynomial that arises as the sum over the path weights of all $s$-$t$-paths in~$\Gamma$. We remark that this notion of weight differs from the literature by a sign.

\begin{proposition} \label{prop:constants}
For a nonzero constant $c\in\HZ$, there is a binary algebraic branching program $\Gamma$ with at most $\gO(\log |c|)$ vertices whose path value is~$c$.
\end{proposition}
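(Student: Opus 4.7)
The plan is to mimic the binary expansion of $|c|$ by a chain of length-$2$ ``doubling gadgets'' and to route the target $t$ to the appropriate positions along the chain. The only subtlety is the non-standard sign $(-1)^{i-1}$ in the definition of path weight, which I will handle by arranging that every $s$-$t$ path has the same length parity, so all sign factors agree.

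Concretely, I first assume $c > 0$ and write $c = \sum_{i \in S} 2^i$ for some $S \subseteq \{0,1,\ldots,k\}$ with $k = \lfloor \log_2 c \rfloor$. I introduce ``main'' vertices $s = u_0, u_1, \ldots, u_k$ and, for each $1 \le i \le k$, replace the step $u_{i-1} \to u_i$ by two length-$2$ paths $u_{i-1} \to a_i \to u_i$ and $u_{i-1} \to b_i \to u_i$ through fresh auxiliary vertices $a_i, b_i$. Finally I add the target $t$ together with an edge $u_i \to t$ for each $i \in S$. A short induction on $i$ shows that there are exactly $2^i$ paths from $s$ to $u_i$, all of length $2i$; hence the $2^i$ paths from $s$ through $u_i$ to $t$ each have length $2i+1$ and path weight $(-1)^{2i} = +1$, and the total path value of the ABP equals $\sum_{i \in S} 2^i = c$. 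The graph is acyclic with no parallel edges and has $(k+1) + 2k + 1 = 3k + 2 = \gO(\log|c|)$ vertices.

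For $c < 0$ I apply the same construction to $|c|$ and then serially append a single extra vertex $t'$ with edge $t \to t'$, redeclaring $t'$ as the target. Every $s$-$t'$ path is an $s$-$t$ path extended by one edge, so its weight is multiplied by $-1$ and the path value flips to $-|c| = c$; the extra vertex does not affect the $\gO(\log|c|)$ bound.

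The only genuine obstacle is sign cancellation: because of the convention $(-1)^{i-1}$, paths of different length parities contribute oppositely, so a naive Horner-style construction that injected a ``$+1$'' via a single shortcut edge $s \to u_i$ would mix parities and destroy the count. Fixing every doubling gadget to have length exactly $2$ is the trick that forces all useful $s$-$t$ paths to share a common length parity, after which the binary-expansion argument goes through cleanly.
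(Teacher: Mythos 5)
Your proof is correct, and it takes a somewhat different route from the paper's. The paper builds a DAG with exactly $c$ paths from $s$ to $t$ by realizing a general \emph{addition chain} for $c$ (obtained from repeated squaring, hence of length $\gO(\log c)$) as a digraph in which vertex $c_i$ receives edges from $c_{j_i}$ and $c_{k_i}$, and then subdivides every edge once so as to remove parallel edges and force all $s$-$t$-paths to have even length, making all path weights carry a common sign. Your construction instead hard-codes the binary expansion $c=\sum_{i\in S}2^i$: the length-$2$ doubling gadgets play the role of the squaring steps, and the ``$+1$'' digits are collected by routing an edge from $u_i$ to $t$ for each $i\in S$ rather than being accumulated Horner-style along the chain -- which, as you correctly observe, would mix path-length parities. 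Both constructions use $\gO(\log|c|)$ vertices, and both dispose of negative $c$ by appending one extra vertex. If anything, your sign bookkeeping is the cleaner of the two: all your $s$-$t$-paths have odd length $2i+1$, so each contributes $(-1)^{2i}=+1$ and the value is exactly $+c$; in the paper's version all paths have \emph{even} length $i$, so each path weight is $(-1)^{i-1}=-1$ and the construction as written actually yields $-c$, a harmless slip since the sign can be flipped by the same one-vertex trick. The paper's formulation is marginally more general (any short addition chain for $c$ would serve), but for the stated $\gO(\log|c|)$ bound the two arguments are equivalent in strength.
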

\begin{proof}
We can assume without loss of generality that $c>0$: Given a binary algebraic branching program $\Gamma$ with path value $c>0$ and at most $\gO(\log c)$ vertices, we can add a single vertex $t'$ and an edge from $t$ to $t'$ with label $1$ to obtain a new program $(\Gamma',s,t')$ with path value~$-c$.

For a natural number~$c$, an \emph{addition chain} of length $\ell$ is a sequence of distinct natural numbers $1=c_0,c_1\dts,c_\ell=c$ together with a sequence of tuples $(j_1,k_1)\dts,(j_\ell,k_\ell)$ such that $c_i=c_{j_i}+c_{k_i}$ and $j_i, k_i < i$ for all $1\le i\le\ell$. However, we will think of this data as a digraph $\tilde \Gamma$ on the vertices $\set{c_0\dts,c_\ell}$ with edges $(c_{j_i},c_i)$ and $(c_{k_i},c_i)$ for all $1\le i\le \ell$. The labels of all edges are equal to~$1$. Note that we allow double edges in these digraphs temporarily.
We set $s\df c_0$ and $t\df c_\ell$. Thus, we view an addition chain as an acyclic digraph where every vertex except for $c_0$ has indegree two. This already strongly resembles a binary algebraic branching program, but $\tilde \Gamma$ might have parallel edges. Observe that there are exactly $c_i$ many paths from $c_0$ to $c_i$ in the digraph $\tilde \Gamma$. In particular, there are exactly $c$ paths from $s$ to $t$ in $\tilde \Gamma$.

Using the algorithm of repeated squaring \cite[Sec.~4.6.3, eq.~(10)]{Knu:97} one can construct an addition chain $\tilde \Gamma$ as above with at most $\gO(\log c)$ vertices and such that there are exactly $c$ paths from $s$ to $t$ in $\tilde \Gamma$. For every edge $(v,w)$ in~$\tilde \Gamma$ we add a new vertex $u$ and replace the edge $(v,w)$ by two new edges $(v,u)$ and $(u,w)$. We call the resulting digraph $\Gamma=(\Gamma,s,t)$. Observe that the binary algebraic branching program $\Gamma$ has no parallel edges any more and all $s$-$t$-paths in $\Gamma$ have even length. Also, the digraph $\Gamma$ still has $\gO(\log c)$ many vertices. Labelling all edges in $\Gamma$ with~$1$, the path value of $\Gamma$ is equal to~$c$.
\end{proof}

\begin{proposition}\label{pro:integercoeffs}
Let $C$ be an $n\times n$ matrix whose entries are variables and \emph{arbitrary} integer entries. Let $c_{\max}$ be the integer entry of~$C$ with the largest absolute value. Then there is a binary variable matrix~$A$ of size $\gO(n^2\cdot\log|c_{\max}|)$ with $\det(A)=\det(C)$.
\end{proposition}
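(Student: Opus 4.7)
The plan is to view $C$ as the directed adjacency matrix of a digraph $G$ on $n$ vertices (allowing self-loops), so that by the observation in this section $\det(C)$ equals the cycle-cover value of $G$. I will then replace every edge of $G$ whose label is an integer $c$ with $|c|\ge 2$, as well as every edge labeled $-1$, by a gadget supplied by Proposition \ref{prop:constants}; the resulting digraph $G'$ has every edge labeled $0$, $1$, or a variable, and I will take $A$ to be its directed adjacency matrix.

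Concretely, for an edge $(u,v)$ carrying integer label $c$ with $|c|\ge 2$, I invoke Proposition \ref{prop:constants} to obtain a binary algebraic branching program $(\Gamma,s,t)$ with path value $c$ and $\gO(\log|c|)$ vertices. I then glue $\Gamma$ into $G$ by identifying $s$ with $u$ and $t$ with $v$, deleting the original edge $(u,v)$, and attaching a self-loop labeled $1$ at each interior vertex of $\Gamma$. An edge labeled $-1$ is handled by the same two-vertex sign-flip trick used at the start of the proof of Proposition \ref{prop:constants}. The size bound is immediate: $G$ has at most $n^2$ integer-labeled edges and each gadget introduces $\gO(\log|c_{\max}|)$ new vertices, so $A$ has size $n+\gO(n^2\log|c_{\max}|)=\gO(n^2\log|c_{\max}|)$.

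The heart of the argument is to verify that $\det(A)=\det(C)$. Since each $\Gamma$ is acyclic and its interior vertices are only adjacent to other vertices of $\Gamma$, no interior vertex can sit on a cycle of $G'$ unless that cycle traverses both $s=u$ and $t=v$. Because $u$ and $v$ each lie on a single cycle of any cycle cover, at most one cycle of a cycle cover $\mathcal{C}'$ passes through the gadget, and the remaining interior vertices must be covered by their self-loops (each contributing factor $1$). This gives a weight-preserving bijection between cycle covers of $G'$ and pairs consisting of a cycle cover of $G$ together with, for each edge-replacing gadget traversed, a choice of $s$-$t$-path in $\Gamma$. Summing the weights of the $s$-$t$-path contributions within each gadget reproduces exactly the path value $c$, i.e., the factor contributed by the original edge in $G$. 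Iterating over all replaced edges yields $\det(A)=\det(C)$.

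The main obstacle is the sign bookkeeping. Replacing a single edge of a cycle by an $s$-$t$-path of length $i$ changes the cycle length by $i-1$, and it is precisely the $(-1)^{i-1}$ factor in the definition of path weight that compensates for the corresponding flip in the cycle weight's $(-1)^{\text{length}-1}$ sign; this is exactly why the path weight was defined with that unusual sign convention in the first place. Once this is checked for one gadget, the general case follows by induction on the number of integer-labeled edges replaced. Everything else reduces to routine cycle-cover manipulations and the addition-chain construction already established in Proposition \ref{prop:constants}.
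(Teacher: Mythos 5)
Your proof is correct and follows essentially the same route as the paper: interpret $C$ as the directed adjacency matrix of a digraph, replace each integer-labelled edge (label neither $0$ nor $1$) by the branching-program gadget of Proposition~\ref{prop:constants} with label-$1$ loops on the new vertices, and observe that the $(-1)^{i-1}$ sign in the path weight exactly compensates the change in cycle length, so that summing over the $s$-$t$-paths recovers the factor $c$. The only cosmetic difference is that you identify $s$ and $t$ with the edge's endpoints, whereas the paper keeps them as fresh vertices attached by two extra label-$1$ edges; this shifts every affected cycle length by $2$ and therefore leaves the sign bookkeeping unchanged.
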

\begin{proof} \tikzset{>=latex}
\begin{figure}[b]
\begin{tikzpicture}[scale=5.2]
\node[draw, rectangle] (3cycles) at (0,1) {
\begin{tikzpicture}[scale=1]
\node[draw, circle,inner sep=0pt,minimum size=0.2cm] (1) at (1,0) {};
\node[draw, circle,inner sep=0pt,minimum size=0.2cm] (2) at (2,0) {};
\node[draw, circle,inner sep=0pt,minimum size=0.2cm] (3) at (4,0) {};
\path (1) edge[ultra thick,out=315,in=225,loop] node[below] {$x$} (1);
\path (2) edge[ultra thick,out=315,in=225,loop] node[below] {$3$} (2);
\path (3) edge[ultra thick,in=315,out=225,loop] node[below] {$y$} (3);
\path (2) edge[->,bend left] node[above] {$-2$} (3);
\path (3) edge[->,bend left] node[below] {$x$} (2);
\end{tikzpicture}
};
\node[draw, rectangle] (2cycles) at (0,2.05) {
\begin{tikzpicture}[scale=1]
\node[draw, circle,inner sep=0pt,minimum size=0.2cm] (1) at (1,0) {};
\node[draw, circle,inner sep=0pt,minimum size=0.2cm] (2) at (2,0) {};
\node[draw, circle,inner sep=0pt,minimum size=0.2cm] (3) at (4,0) {};
\path (1) edge[ultra thick,out=315,in=225,loop] node[below] {$x$} (1);
\path (2) edge[out=315,in=225,loop] node[below] {$3$} (2);
\path (3) edge[in=315,out=225,loop] node[below] {$y$} (3);
\path (2) edge[ultra thick,->,bend left] node[above] {$-2$} (3);
\path (3) edge[ultra thick,->,bend left] node[below] {$x$} (2);
\end{tikzpicture}
};
\node[draw, rectangle] (3cyclesa) at (1,1) {
\begin{tikzpicture}[scale=1]
\node[draw, circle,inner sep=0pt,minimum size=0.2cm] (1) at (1,0) {};
\node[draw, circle,inner sep=0pt,minimum size=0.2cm] (2) at (2,0) {};
\node[draw, circle,inner sep=0pt,minimum size=0.2cm] (3) at (5,0) {};
\node[draw, circle,inner sep=0pt,minimum size=0.2cm] (a) at (2,1) {};
\node[draw, circle,inner sep=0pt,minimum size=0.2cm] (b1) at (3,1.5) {};
\node[draw, circle,inner sep=0pt,minimum size=0.2cm] (b2) at (3,0.5) {};
\node[draw, circle,inner sep=0pt,minimum size=0.2cm] (c) at (4,1) {};
\node[draw, circle,inner sep=0pt,minimum size=0.2cm] (d) at (5,1) {};
\path (1) edge[ultra thick,out=315,in=225,loop] node[below] {$x$} (1);
\path (2) edge[ultra thick,out=315,in=225,loop] node[below] {$3$} (2);
\path (3) edge[ultra thick,in=315,out=225,loop] node[below] {$y$} (3);
\path (2) edge[->] (a);
\path (a) edge[->] (b1);
\path (a) edge[->] (b2);
\path (b1) edge[->] (c);
\path (b2) edge[->] (c);
\path (c) edge[->] (d);
\path (d) edge[->] (3);
\path (3) edge[bend left,->] node[below] {$x$} (2);
\path (a) edge[ultra thick,out=135,in=45,loop] (a);
\path (b1) edge[ultra thick,out=135,in=45,loop] (b1);
\path (b2) edge[ultra thick,out=315,in=225,loop] (b2);
\path (c) edge[ultra thick,out=135,in=45,loop] (c);
\path (d) edge[ultra thick,out=135,in=45,loop] (d);
\end{tikzpicture}
};
\node[draw, rectangle] (2cyclesb) at (1,1.7) {
\begin{tikzpicture}[scale=1]
\node[draw, circle,inner sep=0pt,minimum size=0.2cm] (1) at (1,0) {};
\node[draw, circle,inner sep=0pt,minimum size=0.2cm] (2) at (2,0) {};
\node[draw, circle,inner sep=0pt,minimum size=0.2cm] (3) at (5,0) {};
\node[draw, circle,inner sep=0pt,minimum size=0.2cm] (a) at (2,1) {};
\node[draw, circle,inner sep=0pt,minimum size=0.2cm] (b1) at (3,1.5) {};
\node[draw, circle,inner sep=0pt,minimum size=0.2cm] (b2) at (3,0.5) {};
\node[draw, circle,inner sep=0pt,minimum size=0.2cm] (c) at (4,1) {};
\node[draw, circle,inner sep=0pt,minimum size=0.2cm] (d) at (5,1) {};
\path (1) edge[ultra thick,out=315,in=225,loop] node[below] {$x$} (1);
\path (2) edge[out=315,in=225,loop] node[below] {$3$} (2);
\path (3) edge[in=315,out=225,loop] node[below] {$y$} (3);
\path (2) edge[ultra thick,->] (a);
\path (a) edge[->] (b1);
\path (a) edge[ultra thick,->] (b2);
\path (b1) edge[->] (c);
\path (b2) edge[ultra thick,->] (c);
\path (c) edge[ultra thick,->] (d);
\path (d) edge[ultra thick,->] (3);
\path (3) edge[ultra thick,bend left,->] node[below] {$x$} (2);
\path (a) edge[out=135,in=45,loop] (a);
\path (b1) edge[ultra thick,out=135,in=45,loop] (b1);
\path (b2) edge[out=315,in=225,loop] (b2);
\path (c) edge[out=135,in=45,loop] (c);
\path (d) edge[out=135,in=45,loop] (d);
\end{tikzpicture}
};
\node[draw, rectangle] (2cyclesa) at (1,2.4) {
\begin{tikzpicture}[scale=1]
\node[draw, circle,inner sep=0pt,minimum size=0.2cm] (1) at (1,0) {};
\node[draw, circle,inner sep=0pt,minimum size=0.2cm] (2) at (2,0) {};
\node[draw, circle,inner sep=0pt,minimum size=0.2cm] (3) at (5,0) {};
\node[draw, circle,inner sep=0pt,minimum size=0.2cm] (a) at (2,1) {};
\node[draw, circle,inner sep=0pt,minimum size=0.2cm] (b1) at (3,1.5) {};
\node[draw, circle,inner sep=0pt,minimum size=0.2cm] (b2) at (3,0.5) {};
\node[draw, circle,inner sep=0pt,minimum size=0.2cm] (c) at (4,1) {};
\node[draw, circle,inner sep=0pt,minimum size=0.2cm] (d) at (5,1) {};
\path (1) edge[ultra thick,out=315,in=225,loop] node[below] {$x$} (1);
\path (2) edge[out=315,in=225,loop] node[below] {$3$} (2);
\path (3) edge[in=315,out=225,loop] node[below] {$y$} (3);
\path (2) edge[ultra thick,->] (a);
\path (a) edge[ultra thick,->] (b1);
\path (a) edge[->] (b2);
\path (b1) edge[ultra thick,->] (c);
\path (b2) edge[->] (c);
\path (c) edge[ultra thick,->] (d);
\path (d) edge[ultra thick,->] (3);
\path (3) edge[ultra thick,bend left,->] node[below] {$x$} (2);
\path (a) edge[out=135,in=45,loop] (a);
\path (b1) edge[out=135,in=45,loop] (b1);
\path (b2) edge[ultra thick,out=315,in=225,loop] (b2);
\path (c) edge[out=135,in=45,loop] (c);
\path (d) edge[out=135,in=45,loop] (d);
\end{tikzpicture}
};
\draw (2cycles) -> (2cyclesa);
\draw (2cycles) -> (2cyclesb);
\draw (3cycles) -> (3cyclesa);
\node (matrix) at (0,1.5) {
\begin{minipage}{5cm}
$C=\begin{pmatrix}
3 & 0 & -2 \\
0 & x & 0 \\
x & 0 & y
\end{pmatrix}$\\
\mbox{~}\\
$\det(C)=3xy+2x^2$
\end{minipage}
};
\end{tikzpicture}
\caption{Given a matrix $C$ we construct a digraph~$H$ with directed adjacency matrix $C$ (left hand side) and the digraph $G$ (right hand side) by replacing the edge with label $-2$ in $H$ by a binary algebraic branching program.
We omit the labels for edges that have label~$1$.
The right hand side depicts the cycle covers $K$ of~$G$ and the left hand side shows the corresponding cycle covers $K^H$ of~$H$.}
\label{fig:cyclecovers}
\end{figure}
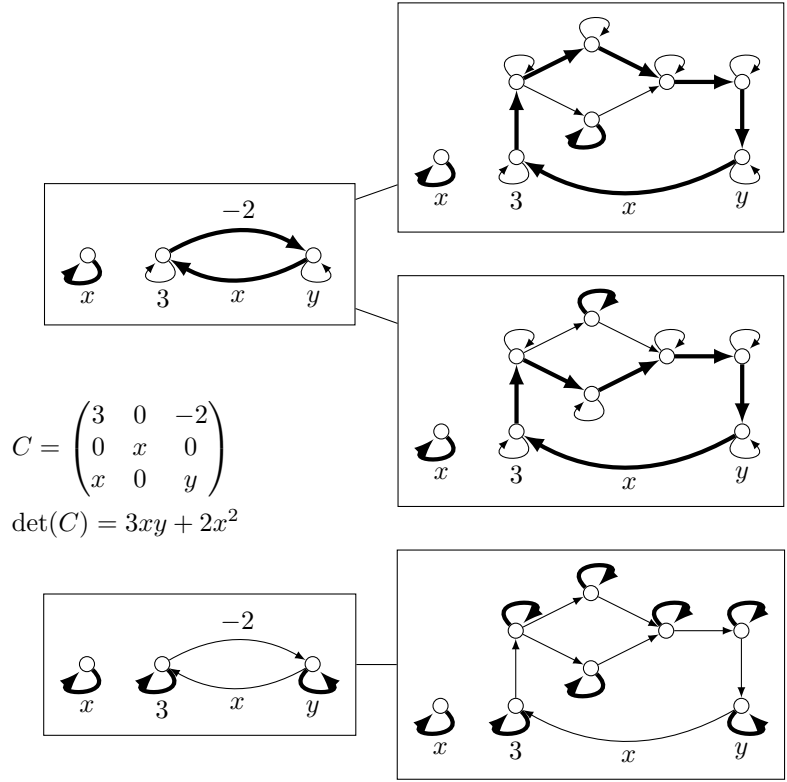
We will interpret $C$ as the directed adjacency matrix of a digraph. Any edge that has an integer label which is neither $1$ nor $0$ will be replaced by a subgraph of size $\gO(\log|c_{\max}|)$ arising from the construction of the previous Lemma \ref{prop:constants}. The directed adjacency matrix of the resulting graph will be the desired matrix~$A$. Formally, we proceed by induction.

Denote by $q$ the number of integer entries in the matrix $C$ that are neither equal to $0$ nor~$1$. By induction on~$q$, we will prove the slightly stronger statement that there is a binary variable matrix~$A$ of size $n + q\cdot\gO(\log|c_{\max}| )$ with $\det(A)=\det(C)$. Since $q \le n^2$, this implies the statement. Note that the case $q=0$ is trivial, so we assume $q\geq 1$ and perform the induction step.

Let $H$ be the digraph whose directed adjacency matrix is~$C$. Recall that this means the following: $H$ is a digraph on the vertices $1\dts,n$ and there is an edge $(i,j)$ with label $C_{ij}$ if $C_{ij}\ne0$ and otherwise no such edge exists. Let $e=(i,j)$ be the edge corresponding to an integer entry $c=C_{ij}$ which is neither $0$ nor~$1$. Let $\Gamma=(\Gamma,s,t)$ be a binary algebraic branching program with path value $c$ and $\gO(\log|c|)$ many vertices, which exists by Proposition \ref{prop:constants}.

We will now replace the edge $(i,j)$ by $\Gamma$ (see Figure~\ref{fig:cyclecovers}): Let $G$ be the digraph that arises from $H\cup \Gamma$ by removing the edge $(i,j)$, adding edges $(i,s)$ and $(t,j)$ with label $1$ and adding loops with label $1$ to all vertices of~$\Gamma$. The directed adjacency matrix of $G$ has size $n+\gO(\log|c|)\le n+\gO(\log|c_{\max}|)$ and contains $q-1$ integer entries which are neither $0$ nor~$1$. By applying the induction hypothesis to the directed adjacency matrix of~$G$, we obtain a matrix $A$ of size
\[ n + \gO(\log|c_{\max}|) + (q-1)\cdot\gO(\log|c_{\max}|)
= n + q\cdot\gO(\log|c_{\max}|) \]
whose determinant equals the value of~$G$. We are left to show that the value of $G$ is equal to $\det(C)$, i.e., the value of~$H$.

For this purpose, we will analyze the relation between cycle covers of $G$ and~$H$, which is straightforward (see Figure~\ref{fig:cyclecovers}): Consider a cycle cover $K$ of~$G$. Any vertex of $\Gamma$ which is not covered by its loop must be part of a cycle whose intersection with $\Gamma$ is a path from $s$ to~$t$. To $K$ we can therefore associate a cycle cover $K^H$ of $H$ as follows: If every vertex of $\Gamma$ is covered by its loop in~$K$, let $K^H$ be $K$ without these loops. Otherwise, there is unique cycle $\kappa_K$ in $K$ that restricts to an $s$-$t$-path $\pi_K$ in~$\Gamma$. Let $\kappa_K^H$ be the intersection $\kappa_K\cap H$ together with the edge $(i,j)$ and note that $\kappa_K^H$ is a cycle in~$H$.
We obtain $K^H$ from $K$ by replacing $\kappa_K$ with $\kappa^H_K$ and removing all remaining loops from inside~$\Gamma$.

All cycle covers $L$ of $H$ are of the form $L=K^H$ for some cycle cover $K$ of~$G$. If $L$ is a cycle cover of $H$ containing the edge $(i,j)$ then the cycle covers $K$ of $G$ with $L=K^H$ are in bijection with the $s$-$t$-paths in~$\Gamma$. We now fix such a cycle cover~$L$. By definition of the value of a digraph, it suffices to show that
\[ \sum_{\begin{subarray}{c}K\text{ cycle cover of }G\\\text{such that }L=K^H\end{subarray}} \weight(K) = \weight(L). \]
Note that $K$ and $L=K^H$ differ only in loops and in the cycles $\kappa_K$ and $\kappa_K^H$, respectively. Since loops contribute a factor of $1$ to the weight of a cycle cover, we are left to prove that
\[ \sum_{\begin{subarray}{c}K\text{ cycle cover of }G\\\text{such that }L=K^H\end{subarray}} \weight(\kappa_K) = \weight(\kappa_K^H). \]
Let $e_1\dts,e_r$ be the labels of the edges of $\kappa_K\cap H$. These are the edges shared by $\kappa_K$ and $\kappa_K^H$. Thus,
\begin{align*}
\weight&(\kappa_K^H)
= (-1)^{r} \cdot c \cdot e_1\cdots e_r
\\ &= \left( \sum_{
\begin{subarray}{c}
\pi\text{ is $s$-$t$-path} \\
\text{inside }P
\end{subarray}}
\weight(\pi) \right) \cdot (-1)^{r} \cdot e_1\cdots e_r
= \left( \sum_{
\begin{subarray}{c}
K\text{ cycle cover of }G \\
\text{such that }L=K^H\end{subarray}}
\weight(\pi_K) \right) \cdot (-1)^{r} \cdot e_1\cdots e_r
\\ &= \left( \sum_{
\begin{subarray}{c}
K\text{ cycle cover of }G \\
\text{such that }L=K^H\end{subarray}}
\weight(\pi_K)  \cdot (-1)^{r} \cdot e_1\cdots e_r\right)
= \sum_{\begin{subarray}{c}K\text{ cycle cover of }G\\\text{such that }L=K^H\end{subarray}} \weight(\kappa_K)
\end{align*}
is precisely the desired equality.
\end{proof}

\begin{corollary}
\label{cor:valuniversality}
For every polynomial $f$ with integer coefficients there exists a binary variable matrix whose determinant is~$f$.
\end{corollary}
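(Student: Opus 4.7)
The corollary follows immediately by composing the two results already established in this section. The plan is to first invoke Theorem~\ref{thm:universalityofdet} to express the given polynomial $f$ as the determinant of some matrix whose entries are variables and \emph{arbitrary} integers, and then to apply Proposition~\ref{pro:integercoeffs} to replace the large integer entries by subgraphs labeled only by $0$, $1$, and variables, while preserving the determinant.

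More concretely, given $f$ with integer coefficients, the second part of Theorem~\ref{thm:universalityofdet} furnishes an integer variable matrix $C$ of some size $n$ with $\det(C)=f$. Letting $c_{\max}$ denote the integer entry of $C$ of maximum absolute value (or $c_{\max}=1$ if no such entry is nonzero), Proposition~\ref{pro:integercoeffs} then produces a binary variable matrix $A$ of size $\gO(n^2\log|c_{\max}|)$ with $\det(A)=\det(C)=f$. This $A$ is the desired binary variable matrix.

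There is essentially no obstacle here, since all the work has been done in Proposition~\ref{pro:integercoeffs}: the induction on the number of non-$\{0,1\}$ integer entries, together with the cycle cover analysis matching each gadget path to the corresponding integer label, already ensures that the replacement procedure preserves the determinant. The only minor point to mention is the degenerate case where $C$ already has all entries in $\{0,1\}\cup\{\text{variables}\}$, in which case $A=C$ works trivially and no replacement is needed.
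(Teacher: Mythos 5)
Your proposal is correct and matches the paper's proof exactly: the paper's argument is simply to combine Theorem~\ref{thm:universalityofdet} with Proposition~\ref{pro:integercoeffs}, which is precisely what you do (with some welcome extra detail on the degenerate case).
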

\begin{proof}
Combine Theorem~\ref{thm:universalityofdet} and Proposition~\ref{pro:integercoeffs}.
\end{proof}

\section{Lower Bounds}
\label{sec:lower}

This section is dedicated to the proof of Theorem \ref{PermanentLowerBound}.
Let $\HB\df\set{0,1}$.
A sequential numbering makes the proof much easier to read, so we think of the variables as arranged in a $3\times 3$ matrix
\[ x = \begin{pmatrix}
x_1 & x_2 & x_3 \\
x_4 & x_5 & x_6 \\
x_7 & x_8 & x_9
\end{pmatrix}. \]
In this section, we will understand $\per_3=\per(x)$ as a polynomial in the variables $x_1\dts,x_9$ instead of the variables $x_{ij}$ with $1\le i,j\le 3$.

\subsection*{Proof Outline}

Let $n\in\HN$ and $A$ an $n\times n$ binary variable matrix.
The binary matrix $B(A)\in\HB^{n\times n}$ is defined as the matrix arising from $A$ by setting all variables to~$1$. We call $B(A)$ the \dd{support matrix} of~$A$. If we set all variables to $1$ in $\per_3$, we obtain the value~$6$, so if $\per_3=\det(A)$, then substituting~$1$ for all variables on both sides of the equation, we obtain the condition
\begin{equation}
\label{theSixCondition}
6=\det(B(A)).
\end{equation}
In \cite{ZAMM,HADAMARDSEQUENCE}, the maximal values of determinants of binary matrices are computed for small values of~$n$. Since
\begin{equation}
\label{Hadamard}
\forall B\in\HB^{5\times5}\colon\:\det(B)\le5,
\end{equation}
we immediately obtain the lower bound $\bdc(\per_3)\ge 6$.

Unfortunately, there are several matrices $B\in\HB^{6\times6}$ that satisfy $\det(B)=6$. We proceed in two steps to verify that nevertheless, none of these matrices $B$ is the support matrix $B(A)$ of a candidate matrix $A$ with $\per_3=\det(A)$. A rough outline is the following:

\begin{enumerate}[label=(\alph*),leftmargin=*,ref=part (\alph*)]
\item Enumerate all matrices $B\in\HB^{6\times6}$ with $\det(B)=6$ up to symmetries.
\item \label{secondpart} For all those matrices $B$ prove that $B$ is not the support matrix~$B(A)$ of a binary variable matrix~$A$ with $\det(A)=\per_3$. We describe this process in the next subsection.
\end{enumerate}

\subsection{Stepwise Reconstruction}
\label{stepwise}
Let us make \ref{secondpart} precise. In the hope of failing, we attempt to reconstruct a binary variable matrix~$A$ that has support~$B$ and which also satisfies $\det(A)=\per_3$. During the reconstruction process, we successively replace $1$'s in $B$ by the next variable. The process is as follows:

Given a binary matrix $B\in\HB^{6\times 6}$, let 
\[ S\df\cset{ (i,j) }{ B_{ij}=1 } \]
be the set of possible variable positions. For any set of positions $I\subseteq S$, we consider the matrix $B_{I}$ that arises from $B$ by placing a variable $y$ in every position in~$I$. If $B$ is the support of a binary variable matrix $A$ with $\det(A)=\per_3$ and $I$ contains exactly the positions where $y := x_1$ occurs in~$A$, then $\det(B_{I})$ must be equal to
\begin{equation}
\label{Per3withY}
\per_3\begin{pmatrix}
y & 1 & 1 \\
1 & 1 & 1 \\
1 & 1 & 1
\end{pmatrix} = 2y + 4.
\end{equation}
We define the set 
\[ \mathcal S\df \cset{ I\subseteq S }{ \det(B_I) = 2y + 4 }. \]

\begin{claim} Let $A$ be a binary variable matrix with support $B$ and $\det(A)=\per_3$. Let $k\in\set{1\dts,9}$ and define $I_k\df\cset{ (i,j) }{ A_{ij} = x_k }$ as the set of positions where the variable $x_k$ occurs in~$A$. Then, we have $I_k\in\mathcal{S}$.
\end{claim}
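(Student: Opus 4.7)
The plan is to exploit a single evaluation homomorphism. Consider the substitution $\varphi_k \colon \HZ[x_1,\dots,x_9] \to \HZ[y]$ defined by $\varphi_k(x_k) = y$ and $\varphi_k(x_j) = 1$ for $j \neq k$. Since $\det$ is a polynomial identity, applying $\varphi_k$ entrywise to $A$ and then taking the determinant equals applying $\varphi_k$ to $\det(A) = \per_3$.

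First I would compute the right-hand side. In $\per_3 = \sum_{\pi \in \aS_3} \prod_i x_{i,\pi(i)}$ there are $6$ monomials and, by a straightforward double-counting, each variable $x_k$ appears in exactly $2$ of them. Thus $\varphi_k(\per_3) = 2y + 4$, matching the right-hand side of equation~\eqref{Per3withY}.

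Next I would identify the left-hand side with $B_{I_k}$ by a case analysis on the entries of $A$. For each position $(i,j)$: if $A_{ij} = 0$ then $B_{ij} = 0$ and $(i,j) \notin I_k$, so $(B_{I_k})_{ij} = 0 = \varphi_k(A_{ij})$; if $A_{ij} = 1$ then $B_{ij} = 1$ and $(i,j) \notin I_k$, so $(B_{I_k})_{ij} = 1 = \varphi_k(A_{ij})$; if $A_{ij} = x_k$ then $B_{ij} = 1$ and $(i,j) \in I_k$, so $(B_{I_k})_{ij} = y = \varphi_k(A_{ij})$; and if $A_{ij} = x_\ell$ with $\ell \neq k$ then $B_{ij} = 1$ and $(i,j) \notin I_k$, so $(B_{I_k})_{ij} = 1 = \varphi_k(A_{ij})$. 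Hence the matrix obtained from $A$ by applying $\varphi_k$ entrywise coincides with $B_{I_k}$.

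Combining these two computations gives $\det(B_{I_k}) = \varphi_k(\det(A)) = \varphi_k(\per_3) = 2y + 4$, so $I_k \in \mathcal{S}$ by definition. There is no real obstacle here; the only subtlety is the bookkeeping that ensures the substitution sends every type of entry of $A$ to the correct entry of $B_{I_k}$, and this is handled by the four-case check above.
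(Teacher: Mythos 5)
Your proof is correct and is essentially the paper's argument: substitute $1$ for every variable except $x_k$, observe that this turns $A$ into $B_{I_k}$ and $\per_3$ into $2y+4$, and conclude from $\det(A)=\per_3$. The only cosmetic difference is that the paper reduces to $k=1$ by the symmetry of the permanent and cites equation~\eqref{Per3withY}, whereas you count directly that $x_k$ occurs in exactly two of the six monomials; both yield the same computation.
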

\begin{proof} By the symmetry of the permanent, we may assume that $k=1$. In the matrix~$A$, setting every variable except $y\df x_1$ to $1$ yields the matrix $B_I$ and therefore, $\det(B_I)=2y+4$ as in \eqref{Per3withY}, because $\det(A)=\per_3$. This means $I_k\in\mathcal{S}$ by definition.
\end{proof}

Therefore, if $B$ is the support matrix $B(A)$ of a binary variable matrix $A$ with $\det(A)=\per_3$, we can find $9$ pairwise disjoint sets in $\mathcal S$, one for each variable $x_k$, that specify precisely where to place these variables in~$A$.

By a recursive search and backtracking, we now look for sets $I_1\dts,I_k\in\mathcal S$ such that
\begin{enumerate}[label=(\roman*)]
\item $I_1, \ldots, I_k$ are pairwise disjoint.
\item Placing $x_i$ into $B$ at every position from $I_i$ for $1\le i\le k$ yields a matrix $A_k$ such that $\det(A_k)$ is equal to $\per_3(x_1\dts,x_k,1\dts,1)$.
\end{enumerate}
The search is recursive in the following sense: First, the possible choices at depth $k=1$ are given by $\mathcal S$. Enumerating the possible choices for depth $k+1$ works as follows: For each choice $I_1\dts,I_k\in\mathcal S$ with the above two properties, we enumerate all $I_{k+1}\in\mathcal S$ that have empty intersection with $I_1\pts\cup I_{k}$ and check whether condition~(b) is satisfied.

If the recursive search never reaches $k=9$ or fails there, then $B$ is not the support of a binary variable matrix $A$ with $\det(A)=\per_3$. If we reach level $9$ however and do not fail there, we have found such an~$A$.

In practice, the process is sped up significantly by working over a large finite field $\mathbb F_p$ and choosing random elements $x_1\dts,x_9\in\mathbb F_p\setminus\set{0,1}$.

\subsection{Exploiting Symmetries in Enumeration}
Let us call two matrices \emph{equivalent} if they arise from each other by transposition and/or permutation of rows and/or columns.
A key observation is that equivalent matrices have the same determinant up to sign.
Therefore we do not have to list all binary matrices $B \in \HB^{6 \times 6}$ with $\det(B)=6$,
but it suffices to list one representative matrix $B$ with $\det(B)=\pm 6$ for each equivalence class.
It happens to be the case that the equivalence classes of $6 \times 6$ binary matrices are in bijection to graph isomorphy classes of undirected bipartite graphs
$G=(V\cup W,E)$ with $\abs V=\abs W=6$, $V\cap W=\emptyset$ as follows:
For $V=\set{v_1\dts,v_6}$ and $W=\set{w_1\dts,w_6}$, the bipartite adjacency matrix $B(G)\in\HB^{6\times 6}$ of $G$ is defined via $B(G)_{i,j}=1$ if and only if $\set{v_i,w_j}\in E$.
Row and column permutations in $B(G)$ are reflected by renaming vertices in~$G$.
Transposition of $B(G)$ amounts to switching $V$ and $W$ in~$G$.

The computer software nauty \cite{NAUTY} can enumerate all $251\,610$ of these bipartite graphs, which is already a significant improvement over the $2^{36}=68\,719\,476\,736$ elements of $\HB^{6\times6}$.
To further limit the number of bipartite graphs that have to be considered, we make the following observations: 

\begin{itemize}[leftmargin=*]
\item We need not consider binary matrices $B$ containing a row $i$ with only a single entry $B_{ij}$ equal to~$1$. Indeed, Laplace expansion over the $i$-th row yields that $\det(B)$ is equal to the determinant of a $5\times 5$ binary matrix, which can at most be~$5$, see \eqref{Hadamard}.
Translating to bipartite graphs, we only need to consider those bipartite graphs where all vertices have at least two neighbours.
\item If two distinct vertices in $G$ have the same neighbourhood, then the bipartite adjacency matrix $B(G)$ has two identical rows (or columns) which would imply $\det(B(G))=0$. Hence, we only need to enumerate bipartite graphs where all vertices have distinct neighbourhoods. Unfortunately nauty can impose this restriction only on rows and not on columns.
\end{itemize}

\noindent With these restrictions, the nauty command
\begin{center}
\texttt{genbg -d2:2 -z 6 6}
\end{center}
generates $44\,384$ bipartite graphs, only $263$ of which have a bipartite adjacency matrix with determinant equal to $\pm 6$. We then preprocess this list by swapping the first two rows of any matrix with negative determinant.

Finally, the stepwise reconstruction (section \ref{stepwise}) fails for all of these $263$ matrices, proving that $\bdc(\per_3)\ge 7$.
The algorithm takes 28 seconds on an Intel Core\texttrademark{} i7-4500U CPU (2.4 GHz) to finish.

\medskip Unfortunately, $\bdc(\per_4)$ can currently not be determined in this fashion because the enumeration of all apropriate bipartite graphs, already on $9+9$ vertices, is infeasible.

\section{Uniqueness of the Grenet construction in the 7 by 7 case}
\label{sec:uniqueness}
The methods from Section~\ref{sec:lower} can be used to determine all $7\times7$ binary variable matrices $A$ with the property that $\det(A)=\per_3$. By means of a cluster computation over the course of one week, we determined all $463$ binary variable matrices with this property and made some noteworthy discoveries.

The Grenet construction (see Section~\ref{grenetconstruction}) yields the matrix 
\begin{equation} \label{grenet7x7} \begin{pmatrix}
 x_{11} & x_{12} & x_{13} & 0 & 0 & 0 & 0 \\
 1 & 0 & 0 & x_{32} & x_{33} & 0 & 0 \\ 
 0 & 1 & 0 & x_{31} & 0 & x_{33} & 0 \\ 
 0 & 0 & 1 & 0 & x_{31} & x_{32} & 0 \\
 0 & 0 & 0 & 1 & 0 & 0 & x_{23} \\ 
 0 & 0 & 0 & 0 & 1 & 0 & x_{22} \\
 0 & 0 & 0 & 0 & 0 & 1 & x_{21}
\end{pmatrix}. \end{equation}
It is the unique \enquote{sparse} $7\times7$ binary variable matrix from among the $463$, in the sense that every other matrix from the list has more than three nonzero entries in some row or column. 

Motivated by the above observation, we verified by hand (with computer support) that in fact, all of the $463$ matrices can be reduced to \eqref{grenet7x7} by means of elementary row and column operations. This can be summarized as follows:

\begin{proposition}
Every $7\times 7$ binary variable matrix $A$ with $\det(A)=\per_3$ is equivalent to the Grenet construction \eqref{grenet7x7} under the following two group actions:
\begin{enumerate}[leftmargin=*]
\item The action of  $\cset{ (g,h)}{ \det(g)=\det(h) }\subseteq \operatorname{GL}_7(\HZ)\times\operatorname{GL}_7(\HZ)$ on $7\times7$ matrices via left and right multiplication, together with transposition of $7\times7$ matrices. 
\item The action of $\kS_3\times\kS_3$ on the variables $x_{ij}$ with $1\le i,j\le 3$, and the corresponding transposition (i.e. the map $x_{ij}\mapsto x_{ji}$.) 
\end{enumerate} 
Note that (1) leaves the determinant of any $7\times7$ binary variable matrix invariant and (2) leaves the permanent polynomial invariant.
\qed 
\end{proposition}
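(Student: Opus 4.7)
The plan is to leverage the complete computer-generated list of $463$ binary variable matrices $A$ with $\det(A)=\per_3$ and, for each entry on the list, exhibit an explicit reduction to the Grenet construction \eqref{grenet7x7}. The easy direction of the proposition — that the orbit of the Grenet matrix under the two group actions is contained in the set of binary variable matrices with determinant $\per_3$ — is immediate: the $\operatorname{GL}$-action preserves the determinant thanks to the constraint $\det(g)=\det(h)$ (and because transposition preserves determinants), while the $\kS_3\times\kS_3$-action on the indices of $x_{ij}$ together with the swap $x_{ij}\mapsto x_{ji}$ stabilizes $\per_3$. Thus the orbit is contained in the $463$-element list, and the task is to establish the reverse inclusion.

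For the reverse inclusion, I would first pass to coarse invariants that are constant on orbits in order to cut down the casework. Natural invariants include the multiset of multiplicities of the nine variables $x_{ij}$ in $A$, the multisets of \enquote{row types} and \enquote{column types} (recording, for each row and column, the number of zeros, ones, and distinct variable entries), and the pattern of nonzero entries up to row/column permutations. The Grenet construction is distinguished in the list by the fact that every row and column contains at most three nonzero entries, so for an arbitrary candidate $A$ one would aim to produce, through a sequence of integer elementary row and column operations (and possibly a transposition and a variable relabelling), a matrix whose sparsity pattern matches \eqref{grenet7x7}. Once the sparsity patterns agree, matching up the variable labels is then a finite check handled by $\kS_3\times\kS_3$ and transposition.

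The main obstacle is the sheer volume of cases: even after grouping by the invariants above, many distinct reduced patterns remain, and each requires its own explicit reduction. This is where computer support enters: for each representative $A$, one performs a guided search over short sequences of elementary $\operatorname{GL}_7(\HZ)$ operations and variable relabellings, terminating as soon as the current matrix agrees with an image of \eqref{grenet7x7}. A subtlety worth flagging is that the admissible group elements must send binary variable matrices to binary variable matrices, whereas intermediate steps of the search may produce entries outside $\{0,1\}\cup\{x_{ij}\}$; this is harmless because only the \emph{composition} of the chosen operations needs to land back in the binary variable regime. After all $463$ reductions have been carried out and verified, combining the two inclusions yields the claimed orbit equality.
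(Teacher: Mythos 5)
Your proposal follows essentially the same route as the paper: the proof there is precisely the computer-aided enumeration of all $463$ binary variable matrices $A$ with $\det(A)=\per_3$ followed by an explicit (hand-plus-computer) reduction of each one to \eqref{grenet7x7} via elementary row and column operations, transposition, and relabelling of the variables. The only caveat is your ``easy direction'': as literally stated it is false (the $\operatorname{GL}_7(\HZ)\times\operatorname{GL}_7(\HZ)$-orbit of the Grenet matrix contains matrices that are not binary variable matrices), but since the proposition only asserts the inclusion of the $463$ matrices into the orbit, this remark is not needed and does not affect the argument.
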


\begin{example}
One of the matrices that occur in our enumeration is the matrix
\[ 
A \df \begin{pmatrix}
x_{31} &x_{32} &x_{31} &0 &x_{32} &1 &x_{23} \\
 1 &x_{33} &0 &x_{31} &x_{33} &x_{31} &x_{22} \\
x_{33} &0 &x_{33} &x_{32} &1 &x_{32} &x_{21} \\
 1 &0 &1 &0 &0 &0 &x_{22} \\
 0 &x_{11} &x_{12} &x_{13} &0 &0 &0 \\
 0 &1 &0 &0 &1 &0 &x_{21} \\
 0 &0 &0 &1 &0 &1 &x_{23}
\end{pmatrix}.
\]
One can check that indeed $\det(A)=\per_3$. In this case, the matrices 
\begin{align*}
g &\df \begin{pmatrix}
0 & 0 & 0 &  0 & -1 &  0 &  0 \\
0 & 0 & 1 &  0 &  0 & -1 &  0 \\
0 & 1 & 0 & -1 &  0 &  0 &  0 \\
1 & 0 & 0 &  0 &  0 &  0 & -1 \\
0 & 0 & 0 &  0 &  0 &  0 &  1 \\
0 & 0 & 0 &  1 &  0 &  0 &  0 \\
0 & 0 & 0 &  0 &  0 &  1 &  0
\end{pmatrix}, & 
h &\df \begin{pmatrix} 
 0 &  1 &  0 & 0 & 1 & 0 & 0 \\
-1 &  0 &  0 & 0 & 0 & 0 & 0 \\
 0 & -1 &  0 & 0 & 0 & 0 & 0 \\
 0 &  0 & -1 & 0 & 0 & 0 & 0 \\
 1 &  0 &  0 & 0 & 0 & 1 & 0 \\
 0 &  0 &  1 & 1 & 0 & 0 & 0 \\
 0 &  0 &  0 & 0 & 0 & 0 & 1
\end{pmatrix}
\end{align*} 
are both invertible over $\HZ$ and $gAh$ is precisely \eqref{grenet7x7}.
\end{example}

\section{Algebraic Complexity Classes}
\label{sec:vpws}
In this section we relate binary determinantal complexity to classical complexity measures.
An \emph{algebraic circuit} $\mathcal C$ over the rational numbers is a directed acyclic digraph whose vertices have indegree 0 or 2
with a single vertex having outdegree 0.
Those with indegree 0 are labeled with an integer or a variable, and are called \emph{input gates}.
Those with indegree 2 are labeled with either $+$ or $\times$ and are called \emph{addition gates} and \emph{multiplication gates}, respectively.
At each addition or multiplication gate the circuit $\mathcal C$ defines a polynomial with rational coefficients via adding/multiplying
the polynomials of its two parents.
For the polynomial $f$ which is defined at the unique vertex with outdegree~0 we say that the circuit \emph{$\mathcal C$ computes $f$}.
If all input gates are labeled with either~$1$, $-1$, or a variable, the circuit is called \emph{constant-free}.
Note that every constant-free circuit computes a polynomial that has integer coefficients.
An algebraic circuit is called \emph{skew} if for every multiplication gate at least one of its two parents is an input gate.
An algebraic circuit is called \emph{weakly skew} if for every multiplication gate $\alpha$ there is at least one of its two parents $\beta$ for which the circuit graph splits into disjoint connected components if we remove the edge between $\alpha$ and~$\beta$.
The \emph{skew complexity} of $f$ is defined as the minimal number of vertices required for a skew algebraic circuit to compute~$f$.
Analogously, the \emph{weakly skew complexity} of $f$ is defined as the minimal number of vertices required for a weakly skew algebraic circuit to compute~$f$.
Moreover, the \emph{constant-free skew complexity} of $f$ is defined as the minimal number of vertices required for a constant-free skew algebraic circuit to compute~$f$
and the \emph{constant-free weakly skew complexity} of $f$ is defined as the minimal number of vertices required for a constant-free weakly skew algebraic circuit to compute~$f$.
The complexity class $\VPs$ is defined as the set of sequences of polynomials with polynomially bounded skew complexity.
Analogously, the complexity class $\VPws$ is defined as the set of sequences of polynomials with polynomially bounded weakly skew complexity,
the complexity class $\VPs^0$ is defined as the set of sequences of polynomials with polynomially bounded constant-free skew complexity,
and the complexity class $\VPws^0$ is defined as the set of sequences of polynomials with polynomially bounded constant-free weakly skew complexity, see also \cite{Mal:03}.
A fundamental result in \cite{toda:92} (see also \cite{mapo:08}) is that $\VPws=\VPs$.
Analyzing the constants which appear in the proof of $\VPws=\VPs$ in \cite{toda:92}, we see that the proof immediately yields $\VPws^0=\VPs^0$.
For the sake of comparison with $\VPs^0$, let us make the following definition.
\begin{definition}
The complexity class $\deto$ consists of all sequences of polynomials that have polynomially bounded binary determinantal complexity $\bdc$.
\end{definition}
The main purpose of this section is to show the following statement.
\begin{proposition}
$\VPs^0 = \deto$.
\end{proposition}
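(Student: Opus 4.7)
The plan is to prove the two inclusions $\deto \subseteq \VPs^0$ and $\VPs^0 \subseteq \deto$ separately, in each case adapting a well-known structural equivalence between weakly skew circuits and matrix representations of the determinant, while carefully controlling the integer constants that appear along the way. A helpful additional tool is the constant-free equality $\VPws^0 = \VPs^0$ already noted in the excerpt, which lets me freely move between weakly skew and skew formulations.

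For $\deto \subseteq \VPs^0$, I would start with a binary variable matrix $A$ of size $n$ and aim to produce a constant-free weakly skew circuit of size polynomial in $n$ computing $\det(A)$. The key ingredient is the existence of a universal polynomial-size weakly skew circuit for the determinant of a generic $n\times n$ matrix whose only constants are $\pm 1$; this follows, for instance, from the clow-sequences construction of Mahajan--Vinay or from a direct analysis of the Berkowitz / Samuelson--Csanky algorithm. Substituting the entries $0$, $1$, and variables of $A$ into this universal circuit then yields a constant-free weakly skew circuit for $\det(A) = f$. Applying $\VPws^0 = \VPs^0$ places $f$ into $\VPs^0$.

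For the converse $\VPs^0 \subseteq \deto$, I would begin with a constant-free weakly skew circuit $\mathcal{C}$ of size $s$ computing $f$ and apply the Valiant / Malod--Portier construction to produce an integer variable matrix $M$ of size polynomial in $s$ with $\det(M) = f$. Because $\mathcal{C}$ uses only the integer constants $\pm 1$ at its input gates, and because the standard Malod--Portier gadgets for addition and multiplication introduce only the constants $\pm 1$ into the matrix, the non-variable entries of $M$ lie in $\{-1,0,1\}$. I would then invoke Proposition~\ref{pro:integercoeffs} (or, equivalently, apply the constant-size binary ABP from Proposition~\ref{prop:constants} directly to each $-1$ entry) to convert $M$ into a binary variable matrix $A$ of size $\gO(s^2)$ with $\det(A) = f$, giving $\bdc(f) = \operatorname{poly}(s)$.

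The main obstacle I anticipate is bookkeeping rather than conceptual: one must verify that the Malod--Portier construction, when fed a constant-free circuit, really does produce a matrix all of whose non-variable entries lie in $\{-1,0,1\}$, and that the constant-free version of the Toda equivalence $\VPws^0 = \VPs^0$ indeed follows line by line from the argument in \cite{toda:92} without secretly invoking other rational constants. Once these two structural points are pinned down precisely, the replacement of $-1$ entries by constant-size binary algebraic branching programs from Section~2 is immediate and introduces at most an $\gO(n^2)$ blow-up in the final matrix size, completing both inclusions.
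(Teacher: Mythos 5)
Your proposal is correct and follows essentially the same route as the paper: one direction via the standard polynomial-size (weakly) skew circuit for the determinant whose only constants are $\pm1$ (the paper cites Toda's Lemma~3.4 where you cite Mahajan--Vinay/Berkowitz, but these are interchangeable here), and the converse by producing a matrix with entries in $\{-1,0,1\}$ and variables from a constant-free skew circuit and then eliminating the $-1$ entries via Proposition~\ref{pro:integercoeffs}. The subtlety you flag -- that the circuit-to-matrix construction introduces $-1$ entries which must be removed to obtain a genuinely binary variable matrix -- is exactly the point the paper's proof highlights and resolves the same way.
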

\begin{proof}
The proof of \cite[Lemma 3.4]{toda:92} immediately shows that $\deto \subseteq \VPs^0$.
To show that $\VPs^0 \subseteq \deto$ we want to adapt the proof of \cite[Lemma 3.5 or Theorem 4.3]{toda:92},
but a subtlety arises:
The proof shows that from a weakly skew or skew circuit $\mathcal C$ we can construct a matrix $A'$
of size polynomially bounded in the number of vertices in $\mathcal C$ such that $\det(A')$ is the polynomial compute by $\mathcal C$
with the drawback that $A'$ is not a binary variable matrix, but $A'$ has as entries variables and constants~$0$,~$1$, and~$-1$.
Fortunately Proposition~\ref{pro:integercoeffs} establishes $\deto = \VPs^0 = \VPws^0$. 
\end{proof}

\begin{remark}
In the past, other models of computation with bounded coefficients have already given way to stronger lower bounds than their corresponding unrestricted models: \cite{Mor:73} on the fast fourier transform, \cite{Raz:03} on matrix multiplication, and \cite{BL:04} on arithmetic operations on polynomials.
\end{remark}

From Valiant's completeness result \cite{Val:79b} we deduce that $\VP \neq \VNP$ implies $\per_m \notin \VPws^0$.
A main goal
is to prove $\per_m \notin \VPws^0$ unconditionally.
This could be a simpler question than $\VP \neq \VNP$ or even $\VP^0 \neq \VNP^0$,
because with what is known today, from $\per_m \in \VPws^0$ we cannot conclude $\VP^0 = \VNP^0$, see \cite[Thm.~4.3]{Koi:04}.
If we replace the permanent polynomial by the Hamiltonian Cycle polynomial 
\[ \operatorname{HC}_m \df \sum_{
\begin{subarray}{c}\pi\in\kS_m\\\pi\text{ is $m$-cycle}\end{subarray}} \prod_{i=1}^m x_{i,\pi(i)},
\]
then the question $\text{HC}_m \notin \VPws$ is indeed equivalent to separating $\VPws^0$ from $\VNP^0$, see \cite[Thm.~2.5]{Koi:04}, mutatis mutandis.
We ran our analysis for $\text{HC}_m$, $m\le4$ and proved $\bdc(\text{HC}_1)=1$, $\bdc(\text{HC}_2)=2$, $\bdc(\text{HC}_3)=3$, $\bdc(\text{HC}_4)\geq 7$.
This means that $7 \leq \bdc(\text{HC}_4)\leq 13$,
where the upper bound follows from considerations analogous to Grenet's construction, see Section~\ref{subsec:hc}.

\section{Graph Constructions for Polynomials}

In this section, we review the proof of Theorem \ref{thm:grenet} from \cite{Gre:11}. Furthermore, we use the same methods to prove the following result about the Hamiltonian Cycle polynomial:

\begin{theorem} \label{hc:bdc} For all natural numbers $m\in\HN$, we have $\bdc(\operatorname{HC}_{m+1})\le  m\cdot 2^{m-1} + 1$.
\end{theorem}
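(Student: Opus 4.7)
The plan is to adapt Grenet's branching-program construction by enriching each state with the \enquote{current vertex} of the partial cycle being built. An $(m+1)$-cycle $\pi\in\kS_{m+1}$ can be encoded by a sequence $1=u_0, u_1\dts,u_m, u_{m+1}=1$ in which $(u_1\dts,u_m)$ is a permutation of $\set{2\dts,m+1}$. I will realize the sum over all such sequences as the value of a digraph $G_m$ whose vertex set is
\[ V\df\set{*}\,\cup\,\cset{(S,u)}{\emptyset\ne S\subseteq\set{2\dts,m+1},\:u\in S}. \]
A direct count gives $\abs V = 1+\sum_{k=1}^m k\binom{m}{k} = m\cdot 2^{m-1}+1$, matching the claimed bound on $\bdc(\operatorname{HC}_{m+1})$.

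Next I specify the edges of $G_m$. From $*$ there is an edge to each $(\set{j},j)$ labelled $x_{1,j}$ for $j\in\set{2\dts,m+1}$; from any non-maximal state $(S,u)$ there is an edge to $(S\cup\set{w},w)$ labelled $x_{u,w}$ for each $w\in\set{2\dts,m+1}\setminus S$; and from each maximal state $(\set{2\dts,m+1},u)$ there is an edge to $*$ labelled $x_{u,1}$. Finally, I attach a self-loop labelled $1$ to every vertex of $V\setminus\set{*}$. The directed adjacency matrix $A$ of $G_m$ is then a binary variable matrix of size $m\cdot 2^{m-1}+1$.

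The key structural observation is that the induced subgraph on the state vertices is acyclic since $|S|$ strictly increases along every state-to-state edge. Together with the absence of a self-loop at $*$, this forces every non-loop cycle of $G_m$ to traverse $*$ exactly once, ascend through the layers $|S|=1,2\dts,m$, and return to $*$. Hence such a cycle must have the form
\[ *\to(\set{v_1},v_1)\to(\set{v_1,v_2},v_2)\to\cdots\to(\set{2\dts,m+1},v_m)\to* \]
for some permutation $(v_1\dts,v_m)$ of $\set{2\dts,m+1}$; it has length $m+1$ and carries weight $(-1)^m x_{1,v_1}x_{v_1,v_2}\cdots x_{v_m,1}$, while the remaining state vertices are covered by loops of weight $+1$. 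Summing over all cycle covers yields $\det(A)=(-1)^m\operatorname{HC}_{m+1}$. For odd $m$ the unwanted sign is removed by swapping two rows of $A$, which manifestly preserves the binary-variable property. The only real check needed is the combinatorial verification that the only non-loop cycles in $G_m$ are the $m!$ big cycles described above; this is immediate from the layered DAG structure and will be the main structural content of the proof.
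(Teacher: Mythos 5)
Your proposal is correct and is essentially the paper's construction: the states $(S,u)$ with $u\in S$ are exactly the paper's vertices $v_{(I,i)}$, the edge labels agree up to choosing $1$ rather than $m+1$ as the anchor of the cycle, and the vertex count $m\cdot 2^{m-1}+1$ matches. The only cosmetic difference is that you merge source and target into the single vertex $*$ and argue about cycle covers directly, whereas the paper first builds a branching program on $m\cdot 2^{m-1}+2$ vertices and then invokes Lemma~\ref{BranchingProgramToBDC} to perform that identification and the sign-fixing row swap.
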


In this section, we denote by $[m]\df\set{1\dts,m}$ the set of numbers between $1$ and~$m$. 

\subsection{Grenet's Construction for the Permanent}
\label{grenetconstruction} In this subsection, we prove Theorem \ref{thm:grenet}. The construction of Grenet is a digraph $\Gamma$ whose vertices $V\df\cset{ v_I }{ I\subseteq[m] }$ are indexed by the subsets of $[m]$. Hence, $|V|=2^m$. We partition $V=V_0\cup\cdots\cup V_m$ such that $V_i$ contains the vertices belonging to subsets of size~$i$. We set $s\df v_\emptyset$ and $t\df v_{[m]}$, so $V_0=\set s$ and $V_m=\set t$. Edges will go exclusively from $V_{i-1}$ to $V_{i}$ for $i\in[m]$. In fact, we insert an edge from $v_{I}$ to $v_{J}$ if and only if there is some $j\in[m]$ with $J=I\cup\set j$. This edge is then labeled with the variable $x_{ij}$, where $i=|J|$. For example, there are $m$ edges going from $V_0$ to $V_1$, one for each variable $x_{1j}$ with $1\le j\le m$. It is clear that for each permutation $\pi\in\kS_m$, there is precisely one $s$-$t$-path in $\Gamma$ whose path weight is $(-1)^{m-1}\cdot x_{1,\pi(1)}\cdots x_{m,\pi(m)}$. Consequently, the path value of the algebraic branching program $\Gamma=(\Gamma,s,t)$ is equal to $(-1)^{m-1}\cdot\per_m$. Theorem \ref{thm:grenet} then follows from the following lemma:

\begin{lemma} \label{BranchingProgramToBDC} Let $\Gamma=(\Gamma,s,t)$ be a binary algebraic branching program on $n\ge 3$ vertices with path value $\pm f$. Then, there is a binary variable matrix of size $n-1$ whose determinant is equal to~$f$.
\end{lemma}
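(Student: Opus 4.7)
The plan is to build the desired matrix as the directed adjacency matrix of a carefully modified digraph. Let $G$ be obtained from $\Gamma$ by identifying the source $s$ and the target $t$ into a single vertex $v^{\ast}$ and attaching a self-loop labelled~$1$ to each of the remaining $n-2$ vertices. Because $s$ has no incoming edges and $t$ has no outgoing edges, the identification produces no parallel edges; the only potential clash is an original edge $s\to t$, which simply becomes a self-loop at $v^{\ast}$ inheriting its label (so in that case we do not add an additional self-loop at $v^{\ast}$). The resulting digraph has exactly $n-1$ vertices, and every edge label is $0$, $1$, or a variable, so its directed adjacency matrix $A$ is a binary variable matrix of size $n-1$.

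The key step is to prove that the value of $G$ equals the path value of $\Gamma$, which by hypothesis is $\pm f$. Since $\Gamma$ is acyclic, every cycle of $G$ is either a self-loop or passes through $v^{\ast}$. In any cycle cover of $G$, the vertices not lying on the (possibly non-existent) cycle through $v^{\ast}$ must be covered by their own self-loops, and a cycle through $v^{\ast}$ of length $i\geq 1$ must enter along an edge $u\to v^{\ast}$ (arising from some edge $u\to t$ in $\Gamma$) and exit along an edge $v^{\ast}\to w$ (arising from some edge $s\to w$ in $\Gamma$). Acyclicity of $\Gamma$ ensures this determines a unique simple $s$-$t$-path in $\Gamma$ of the same length $i$. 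This yields a bijection between cycle covers of $G$ and $s$-$t$-paths of $\Gamma$ in which the weights match: the cycle through $v^{\ast}$ and the corresponding path both have weight $(-1)^{i-1}e_{1}\cdots e_{i}$, while the remaining self-loops each contribute a factor of $1$. Summing over cycle covers, and invoking the observation that the value of a digraph equals the determinant of its directed adjacency matrix, we conclude $\det(A) = \pm f$.

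Finally, if $\det(A) = f$ we are done; if $\det(A) = -f$ we swap any two rows of $A$ to obtain a binary variable matrix $A'$ with $\det(A') = f$. This is possible because the hypothesis $n\geq 3$ guarantees that $A$ has at least two rows, and swapping rows preserves the property of being a binary variable matrix. The main obstacle I anticipate is a careful verification of the cycle-cover bijection in the edge case where $s\to t$ is already an edge of $\Gamma$: one needs to confirm that the length-one $s$-$t$-path through that edge corresponds, with matching weight, to the cycle cover using the resulting self-loop on $v^{\ast}$. Once this bookkeeping is in place, the remainder of the argument is routine.
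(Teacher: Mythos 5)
Your proposal is correct and follows essentially the same route as the paper's proof: identify $s$ and $t$, add label-$1$ loops to the remaining vertices, match cycle covers of the resulting digraph with $s$-$t$-paths of $\Gamma$ (with equal weights), and fix the sign by swapping two rows, which is possible since $n-1\ge 2$. Your extra care about the potential $s\to t$ edge and about parallel edges is a harmless refinement of the same argument.
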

\begin{proof} We first construct a graph $G$ from $\Gamma$ by identifying the two vertices $s$ and $t$ and adding loops with label $1$ to every other vertex. The $s$-$t$-paths in $\Gamma$ are then in one-to-one correspondence with the cycle covers of $G$: Indeed, any cycle cover in $G$ must cover the vertex $s=t$ and this cycle corresponds to an $s$-$t$-path in~$\Gamma$. Every other vertex can only be covered by its loop because $\Gamma$ is acyclic. The graph $G$ now has the value $\pm f$ by definition and its directed adjacency matrix $A$ has size $n-1$. Since $n-1\ge 2$, we can exchange the first two rows of $A$ to change the sign of its determinant.
\end{proof}

\subsection{Hamiltonian Cycle Polynomial}\label{subsec:hc}
In this subsection, we prove Theorem \ref{hc:bdc} using Lemma \ref{BranchingProgramToBDC}.
In order to construct a binary algebraic branching program $\Gamma=(\Gamma,s,t)$ with path value $\operatorname{HC}_{m+1}$, we proceed similar to subsection \ref{grenetconstruction}. We will refer to cyclic permutations in $\kS_{m+1}$ of order $m+1$ simply as \emph{cycles} because no cyclic permutations of lower order will be considered. Observe that the cycles in $\kS_{m+1}$ are in bijection with the permutations in $\kS_m$. This can be seen by associating to $\pi\in\kS_m$ the cycle
$\sigma=(\pi(1)\dts,\pi(m),m+1)\in\kS_{m+1}$.
In other words, $\sigma$ maps $m+1$ to $\pi(1)$, it maps $\pi(1)$ to $\pi(2)$ and so on.

In addition to two vertices $s$ and~$t$, our binary algebraic branching program will have a vertex $v_{(I,i)}$ for every nonempty subset $I\subseteq[m]$ and $i\in I$. 
By our above Lemma \ref{BranchingProgramToBDC}, the resulting binary variable matrix will have a size of
\[
1 + \sum_{i=1}^{m} \binom{m}i \cdot i = m\cdot 2^{m-1} + 1.
\]
For $m=3$, this is equal to $3\cdot 2^2 + 1 = 13$. 

We will construct the edges in $\Gamma$ in such a way that every cycle $\sigma=(a_1\dts,a_m,m+1)$ corresponds to an $s$-$t$-path which has $v_{(I,i)}$ as its $k$-th vertex if and only if $I=\set{a_1\dts,a_k}$ and $i=a_k$. We insert the following edges:
\begin{itemize}[leftmargin=*]
\item from $s$ to $v_{(\set i,i)}$ for each $i\in[m]$ with label $x_{m+1,i}$
\item from $v_{(I,i)}$ to $v_{(I\cup\set j, j)}$ for each $i\in I\subseteq[m]$ and $j\in[m]\setminus I$ with label $x_{i,j}$
\item from $v_{([m],i)}$ to $t$ for each $i\in[m]$ with label $x_{i,m+1}$. 
\end{itemize}

We can again partition the set of vertices as $V=V_0\cup V_1\cup\cdots\cup V_{m+1}$ where $V_0=\set s$, $V_{m+1}=\set t$ and for $k\in[m]$, the set $V_k$ consists of all vertices $v_{(I,i)}$ with $\abs I=k$. Then, edges go only from $V_k$ to~$V_{k+1}$, in particular $\Gamma$ is acyclic. Furthermore, all $s$-$t$-paths in $\Gamma$ have the same lengths and correspond uniquely to cycles in $\kS_{m+1}$ . This concludes the proof of Theorem \ref{hc:bdc}.

\medskip We know no better construction for arbitrary~$m$, but for small $m$ we have 
\begin{align*}
\operatorname{HC}_2 &=
\det\begin{pmatrix}
x_{12} & 0 \\
0 & x_{21}
\end{pmatrix} &
\operatorname{HC}_3 &=
\det\begin{pmatrix}
0 & x_{12} & x_{13} \\
x_{21} & 0 & x_{23} \\
x_{31} & x_{32} & 0
\end{pmatrix}.
\end{align*}

\bibliographystyle{alpha}

\end{document}